\documentclass[10pt,twocolumn]{ieeeconf}
\usepackage{color}
\usepackage{float}
\usepackage{multirow}
\usepackage{amsmath}
\usepackage{amssymb,eqnarray}
\usepackage{mathabx}
\usepackage{graphicx}
\usepackage{stmaryrd}
\usepackage{psfrag}
\usepackage{pstool}

\usepackage{tikz}
\usetikzlibrary{automata, positioning, arrows}
\tikzset{
->, 
>=latex,
node distance=3.5cm, 
every state/.style={thick, fill=blue!15}, 
initial text=$ $, 
}

\DeclareMathOperator*{\argmin}{arg\,min}

\pagenumbering{gobble}

\makeatletter
\newcommand{\pushright}[1]{\ifmeasuring@#1\else\omit\hfill$\equationstyle#1$\fi\ignorespaces}
\newcommand{\pushleft}[1]{\ifmeasuring@#1\else\omit$\equationstyle#1$\hfill\fi\ignorespaces}
\makeatother

\DeclareMathOperator*{\esssup}{ess\,sup}

\usepackage{multicol}
\usepackage{booktabs} 
\usepackage[noend]{algorithmic}
\usepackage{algorithm2e}
\usepackage[english]{babel}
\usepackage{subfigure}
\usepackage{mathtools}
\mathtoolsset{showonlyrefs=false}
\makeatletter
\usepackage{multirow}   
\usepackage{array, makecell} %



\usepackage{multicol}
\usepackage[noend]{algorithmic}
\usepackage{algorithm2e}

\newtheorem{theorem}{\protect\theoremname}
\newtheorem{defn}{\protect\definitionname}
\newtheorem{proposition}{\protect\propositionname}

\newtheorem{lemma}{\protect\lemmaname}

\newtheorem{assum}{\protect\assumname}
\newtheorem{problem}{\protect\probname}

\providecommand{\definitionname}{\textbf{Definition}}
\providecommand{\propositionname}{\textbf{Proposition}}
\providecommand{\remarkname}{\textbf{Remark}}
\providecommand{\theoremname}{\textbf{Theorem}}
\providecommand{\lemmaname}{\textbf{Lemma}}
\providecommand{\assumname}{\textbf{Assumption}}
\providecommand{\probname}{\textbf{Problem}}
\providecommand{\corname}{\textbf{Corollary}}

\IEEEoverridecommandlockouts

\usepackage[margin=.7in]{geometry}
\setlength{\parindent}{0pt} 
\setlength{\parskip}{3pt} 


\title{Risk-Averse Stochastic Shortest Path Planning}
\author{Mohamadreza Ahmadi, Anushri Dixit, Joel W. Burdick, and Aaron D. Ames \thanks{The authors are with the California Institute of Technology, 1200 E. California Blvd., MC 104-44, Pasadena, CA 91125,  e-mail: (\{mrahmadi, adixit, ames\}@caltech.edu, jwb@robotics.caltech.edu.) }}

\begin{document}

\maketitle

\begin{abstract}
We consider the stochastic shortest path planning problem in MDPs, \textit{i.e.,} the problem of designing policies that ensure reaching a goal state from a given initial state with minimum accrued cost. In order to account for rare but important realizations of the system, we consider a nested dynamic coherent risk total cost functional rather than the conventional risk-neutral total expected cost. Under some assumptions, we show that optimal, stationary, Markovian policies exist and can be found via a special Bellman's equation. We propose a computational technique based on difference convex programs (DCPs) to find the associated value functions and therefore the risk-averse policies. A rover navigation MDP is used to illustrate the proposed methodology with conditional-value-at-risk (CVaR) and entropic-value-at-risk (EVaR) coherent risk measures.
\end{abstract}

\section{Introduction}

Shortest path problems~\cite{bellman1958routing}, \textit{i.e.,} the problem of reaching a goal state form an initial state with minimum total cost, arise in several real-world applications, such as driving directions on web mapping websites like MapQuest or Google Maps~\cite{googletalk} and robotic path planning~\cite{chen1996developing}. In a shortest path problem, if transitions from one system state to another is subject to stochastic uncertainty, the problem is referred to as a stochastic shortest path (SSP) problem~\cite{bertsekas1991analysis,sigal1980stochastic}. In this case, we are interested in designing policies such that the \textit{total expected} cost is minimized.  Such planning under uncertainty problems are indeed equivalent to an undiscounted total cost Markov decision processes (MDPs)~\cite{Puterman94} and can be solved efficiently via the dynamic programming method~\cite{bertsekas1991analysis,bertsekas2013stochastic}. 

 {However, emerging applications in path planning, such as autonomous navigation in extreme environments, \textit{e.g.}, subterranean~\cite{fan2021step} and extraterrestrial environments~\cite{ahmadi2020risk}, not only require reaching a goal region, but also risk-awareness for mission success.  Nonetheless, the conventional total expected cost is only meaningful if the law of large numbers can be invoked and it ignores important but rare system realizations. In addition, robust planning solutions may give rise to behavior that is extremely conservative.}


Risk can be quantified in numerous ways. For example, mission risks can be mathematically characterized in terms of chance constraints~\cite{ono2013probabilistic,ono2015chance}, utility functions~\cite{dvijotham2014convex}, and distributional robustness~\cite{xu2010distributionally}. {Chance constraints often account for Boolean events (such as collision with an obstacle or reaching a goal set) and do not take into consideration the tail of the cost distribution. To account for the latter,  risk measures have been advocated for planning and decision making tasks in robotic systems~\cite{majumdar2020should}.} The preference of one risk measure over
another depends on factors such as sensitivity to rare events, ease of estimation from data, and computational tractability. Artzner \textit{et. al.}~\cite{artzner1999coherent} characterized a set of natural properties that are desirable for a risk measure, called a coherent risk measure, and  have henceforth obtained widespread
acceptance in finance and operations research, among others. Coherent risk measures can be interpreted as a special form of distributional robustness, which will be leveraged later in this paper. 

Conditional value-at-risk (CVaR) is an important coherent risk measure that has received significant attention in decision making problems, such as MDPs~\cite{chow2015risk,chow2014algorithms,prashanth2014policy,bauerle2011markov}. General coherent risk measures for MDPs were studied in~\cite{ruszczynski2010risk,ahmadi2020constrained}, wherein it was further assumed the risk measure is \emph{time consistent}, akin to the dynamic programming property. Following the footsteps of~\cite{ruszczynski2010risk}, \cite{tamar2016sequential} proposed a sampling-based algorithm for MDPs with static and dynamic coherent risk measures using policy gradient and actor-critic methods, respectively (also, see a model predictive control technique for linear dynamical systems with coherent risk objectives~\cite{singh2018framework}). A method based on stochastic reachability analysis was proposed in~\cite{chapman2019risk} to estimate a CVaR-safe set of initial conditions via the solution to an MDP. A worst-case CVaR SSP planning method was proposed and solved via dynamic programming in~\cite{gavriel2012risk}. Also, total cost undiscounted MDPs with static CVaR measures were studied in~\cite{carpin2016risk} and solved via a surrogate MDP, whose solution approximates the optimal policy with arbitrary accuracy. 


In this paper, we propose a method for designing policies for SSP planning problems, such that the total accrued cost in terms of dynamic, coherent risk measures is minimized (a generalization of the problems considered in \cite{gavriel2012risk} and~\cite{carpin2016risk} to dynamic, coherent risk measures). We begin by showing that, under the assumption that the goal region is reachable in finite time with non-zero probability, the total accumulated risk cost is always bounded. We further show that, if the coherent risk measures satisfy a Markovian property, we can find optimal, stationary, Markovian risk-averse policies via solving a special Bellman's equation. We also propose a computational method based on difference convex programming to solve the Bellman's equation and therefore design risk-averse policies. We elucidate the proposed method via numerical examples involving a rover navigation MDP and CVaR and entropic-value-at-risk (EVaR) measures.

The rest of the paper is organized as follows. In the next section, we review some definitions and properties used in the sequel. In Section III, we present the problem under study and show its well-posedness  under an assumption. In Section IV, we present the main result of the paper, \textit{i.e.,} a special Bellman's equation for the risk-averse SSP problem. In Section V, we describe a computational method to find risk-averse policies. In Section VI, we illustrate the proposed method via a numerical example and finally, in Section VI, we conclude the paper.

\textbf{Notation: } We denote by $\mathbb{R}^n$ the $n$-dimensional Euclidean space and $\mathbb{N}_{\ge0}$ the set of non-negative integers. We use bold font to denote a vector and $(\cdot)^\top$ for its transpose, \textit{e.g.,} $\boldsymbol{a}=(a_1,\ldots,a_n)^\top$, with $n\in \{1,2,\ldots\}$. For a vector $\boldsymbol{a}$, we use $\boldsymbol{a}\succeq (\preceq) \boldsymbol{0}$ to denote element-wise non-negativity (non-positivity) and $\boldsymbol{a}\equiv \boldsymbol{0}$ to show all elements of $\boldsymbol{a}$ are zero. For a finite set $\mathcal{A}$, we denote its power set by $2^\mathcal{A}$, \textit{i.e.,} the set of all subsets of $\mathcal{A}$. For  a probability space $(\Omega, \mathcal{F}, \mathbb{P})$ and a constant $p \in [1,\infty)$, $\mathcal{L}_p(\Omega, \mathcal{F}, \mathbb{P})$ denotes the vector space of real valued random variables $c$ for which $\mathbb{E}|c|^p < \infty$. Superscripts are used to denote indices and subscripts are used to denote time steps (stages), \textit{e.g.}, for $s \in \mathcal{S}$, $s_1^2$ means the the value of $s^2 \in \mathcal{S}$ at the $1$st stage.

\section{Preliminaries}

This section, briefly reviews notions and definitions used throughout the paper.




We are interested in designing policies for a class of finite MDPs (termed \textit{transient MDPs} in~\cite{carpin2016risk}) as shown in Figure~\ref{fig:fig1}, which is defined next.

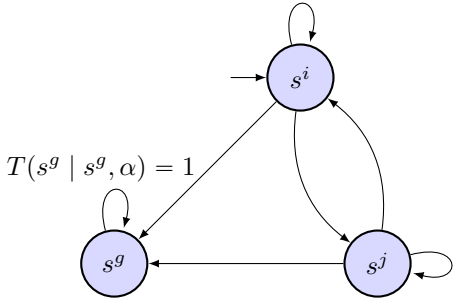
\begin{figure}[t] 
\centering
\begin{tikzpicture}
\node[state, initial] (1) {$s^i$};
\node[state, below left of=1] (2) {$s^g$};
\node[state, right of=2] (3) {$s^j$};
\draw (1) edge[above] node{} (2)
(1) edge[below, bend right, left=0.3] node{} (3)
(1) edge[loop above] node{} (1)
(2) edge[loop above] node{$T(s^g\mid s^g,\alpha)=1 \quad$} (2)
(3) edge[loop right] node{} (3)
(3) edge[below] node{} (2)
(3) edge[above, bend right, right=0.3] node{} (1);
\end{tikzpicture}
\caption{The transition graph of the particular class of MDPs studied in this paper. The goal state $s^g$ is cost-free and absorbing. }
\label{fig:fig1}
\end{figure}

\begin{defn}[MDP] \label{defn:MDP} {
An \emph{MDP} is a tuple, $\mathcal{M}=(\mathcal{S},Act, T, s_0,c,s^g)$, where
\begin{itemize}
\item States $\mathcal{S}  =
\{s^{1} ,\dots,s^{|\mathcal{S}|} \}$ of the
autonomous agent(s) and world model,
\item Actions $Act = \{\alpha^{1},\dots,\alpha^{|Act|}\}$ available to the robot,
\item A transition probability distribution $T(s^{j} |s^{i} ,\alpha)$, satisfying  $\sum_{s  \in \mathcal{S} } T(s |s^{i} ,\alpha) = 1,  \forall s^i  \in \mathcal{S} ,\forall\alpha \in Act$,
\item An initial state $s_0 \in \mathcal{S}$,
and
\item An immediate cost function, $c(s^{i},\alpha^i ) \ge 0$, for each state $s^{i}  \in \mathcal{S} $ and action~$\alpha^i \in {Act}$,
\item $s^g \in \mathcal{S}$ is a special cost-free goal (termination) state, \textit{i.e.,} $T(s^g\mid s^g,\alpha)=1$ and $c(s^g,\alpha)=0$ for all $\alpha \in Act$.
\end{itemize}
}
\end{defn}

We assume the immediate cost function $c$ is non-negative and upper-bounded by a positive constant $\bar{c}$.

Our risk-averse policies for the SSP problem rely on the notion of dynamic coherent risk measures, whose definitions and properties are presented next.

Consider a probability space $(\Omega, \mathcal{F}, \mathbb{P})$, a filtration $\mathcal{F}_0 \subset \cdots \mathcal{F}_T \subset \mathcal{F} $, and an adapted sequence of random variables~(stage-wise costs) $c_t,~t=0,\ldots, T$, where $T \in \mathbb{N}_{\ge 0} \cup \{\infty\}$.
For $t=0,\ldots,T$, we further define the spaces $\mathcal{C}_t = \mathcal{L}_p(\Omega, \mathcal{F}_t, \mathbb{P})$, $p \in [1,\infty)$,  $\mathcal{C}_{t:T}=\mathcal{C}_t\times \cdots \times \mathcal{C}_T$ and $\mathcal{C}=\mathcal{C}_0\times \mathcal{C}_1 \times \cdots$. In order to describe how one can evaluate the risk of sub-sequence $c_t,\ldots, c_T$ from the perspective of stage $t$, we require the following definitions.


\vspace{0.1cm}
\begin{defn}[Conditional Risk Measure]{
A mapping $\rho_{t:T}: \mathcal{C}_{t:T} \to \mathcal{C}_{t}$, where $0\le t\le N$, is called a \emph{conditional risk measure}, if it has the following monotonicity property:
\begin{equation*}
    \rho_{t:T}(\boldsymbol{c}) \le   \rho_{t:T}(\boldsymbol{c}'), \quad \forall \boldsymbol{c}, \forall \boldsymbol{c}' \in \mathcal{C}_{t:T}~\text{such that}~\boldsymbol{c} \preceq \boldsymbol{c}'.
\end{equation*}
}
\end{defn}
\vspace{0.1cm}
\begin{defn}[Dynamic Risk Measure]
{A \emph{dynamic risk measure} is a sequence of conditional risk measures $\rho_{t:T}:\mathcal{C}_{t:T}\to \mathcal{C}_{t}$, $t=0,\ldots,T$.}
\end{defn}
\vspace{0.1cm}
 One fundamental property of dynamic risk measures is their consistency over time~\cite[Definition 3]{ruszczynski2010risk}.
  If a risk measure is time-consistent, we can define the one-step conditional risk measure $\rho_t:\mathcal{C}_{t+1}\to \mathcal{C}_t$, $t=0,\ldots,T-1$ as follows:
\begin{equation}
    \rho_t(c_{t+1}) = \rho_{t,t+1}(0,c_{t+1}),
\end{equation}
and for all $t=1,\ldots,T$, we obtain:
\begin{multline}
    \label{eq:dynriskmeasure}
    \rho_{t,T}(c_t,\ldots,c_T)= \rho_t \big(c_t + \rho_{t+1} ( c_{t+1}+\rho_{t+2}(c_{t+2}+\cdots\\
    +\rho_{T-1}\left(c_{T-1}+\rho_{T}(c_T) \right) \cdots )) \big).
\end{multline}
Note that the time-consistent risk measure is completely defined by one-step conditional risk measures $\rho_t$, $t=0,\ldots,T-1$ and, in particular, for $t=0$, \eqref{eq:dynriskmeasure} defines a risk measure of the entire sequence $\boldsymbol{c} \in \mathcal{C}_{0:T}$.

At this point, we are ready to define a coherent risk measure. 
\begin{defn}[Coherent Risk Measure]\label{defi:coherent}{
We call the one-step conditional risk measures $\rho_t: \mathcal{C}_{t+1}\to \mathcal{C}_t$, $t=1,\ldots,N-1$ as in~\eqref{eq:dynriskmeasure} a \emph{coherent risk measure}, if it satisfies the following conditions
\begin{itemize}
    \item \textbf{Convexity:} $\rho_t(\lambda c + (1-\lambda)c') \le \lambda \rho_t(c)+(1-\lambda)\rho_t(c')$, for all $\lambda \in (0,1)$ and all $c,c' \in \mathcal{C}_{t+1}$;
    \item \textbf{Monotonicity:} If $c\le c'$, then $\rho_t(c) \le \rho_t(c')$ for all $c,c' \in \mathcal{C}_{t+1}$;
    \item \textbf{Translational Invariance:} $\rho_t(c'+c)=\rho_t(c')+c$ for all $c \in \mathcal{C}_t$ and $c' \in \mathcal{C}_{t+1}$;
    \item \textbf{Positive Homogeneity:} $\rho_t(\beta c)= \beta \rho_t(c)$ for all $c \in \mathcal{C}_{t+1}$ and $\beta \ge 0$.
\end{itemize}
}
\end{defn}
\vspace{0.1cm}

In fact, we can show that there exists a dual (or distributionally robust) representation for any coherent risk measure. Let $m,n \in [1,\infty)$ such that $1/m+1/n=1$ and 
 $$
\mathcal{P} = \big\{q \in \mathcal{L}_n(\mathcal{S}, 2^\mathcal{S}, \mathbb{P}) \mid \sum_{s' \in \mathcal{S}} q(s') \mathbb{P}(s')=1,~q\ge 0 \big\}.
$$

\begin{proposition}[Proposition 4.14 in~\cite{follmer2011stochastic}]
Let $\mathcal{Q}$ be a closed convex subset of $\mathcal{P}$. The one-step conditional risk measure $\rho_t:\mathcal{C}_{t+1}\to\mathcal{C}_t$, $t=1,\ldots,N-1$ is a coherent risk measure if and only if
\begin{equation}\label{eq:dual}
    \rho_t(c) = \sup_{q \in \mathcal{Q}}~~ \langle c,q \rangle_{\mathcal{Q}},\quad \forall c \in \mathcal{C}_{t+1},
\end{equation}
where $\langle \cdot, \cdot \rangle_{\mathcal{Q}}$ denotes the inner product in ${\mathcal{Q}}$. 
\end{proposition}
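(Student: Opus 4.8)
The plan is to prove the two directions separately. The ``if'' direction is a routine verification of the four axioms in Definition~\ref{defi:coherent}, whereas the ``only if'' direction rests on convex (Fenchel--Moreau) duality, after which the individual coherence axioms are used to pin down the structure of the dual set $\mathcal{Q}$.

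For sufficiency, suppose $\rho_t(c)=\sup_{q\in\mathcal{Q}}\langle c,q\rangle_{\mathcal{Q}}$, where $\langle c,q\rangle_{\mathcal{Q}}=\sum_{s'\in\mathcal{S}}c(s')q(s')\mathbb{P}(s')$. I would check each property: convexity, because a pointwise supremum of the affine maps $c\mapsto\langle c,q\rangle_{\mathcal{Q}}$ is convex; monotonicity, because every $q\in\mathcal{Q}\subseteq\mathcal{P}$ is nonnegative, so $c\preceq c'$ gives $\langle c,q\rangle_{\mathcal{Q}}\le\langle c',q\rangle_{\mathcal{Q}}$ for each $q$ and hence after taking the supremum; positive homogeneity, from $\langle\beta c,q\rangle_{\mathcal{Q}}=\beta\langle c,q\rangle_{\mathcal{Q}}$ for $\beta\ge0$; and translational invariance, using the normalization $\sum_{s'}q(s')\mathbb{P}(s')=1$ built into $\mathcal{P}$, so that for an $\mathcal{F}_t$-measurable (here constant) $c$ one has $\langle c'+c,q\rangle_{\mathcal{Q}}=\langle c',q\rangle_{\mathcal{Q}}+c$ and the supremum over $q$ shifts by exactly $c$.

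For necessity, I would regard $\mathcal{C}_{t+1}$ and the space of densities as a dual pair under $\langle\cdot,\cdot\rangle_{\mathcal{Q}}$. Since $\rho_t$ is convex and closed (in the finite state-space setting of this paper it is finite-valued, hence continuous; in the general $\mathcal{L}_p$ setting one additionally assumes lower semicontinuity / the Fatou property), Fenchel--Moreau gives $\rho_t=\rho_t^{**}$, i.e. $\rho_t(c)=\sup_q\big(\langle c,q\rangle_{\mathcal{Q}}-\rho_t^*(q)\big)$ with conjugate $\rho_t^*(q)=\sup_c\big(\langle c,q\rangle_{\mathcal{Q}}-\rho_t(c)\big)$. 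The crucial step is that positive homogeneity of $\rho_t$ makes $\rho_t^*$ an indicator function: replacing $c$ by $\beta c$ for $\beta>0$ and using $\rho_t(\beta c)=\beta\rho_t(c)$ yields $\rho_t^*(q)=\beta\rho_t^*(q)$, which together with $\rho_t^*(q)\ge0$ (evaluate at $c=0$, noting $\rho_t(0)=0$ by positive homogeneity) forces $\rho_t^*(q)\in\{0,+\infty\}$; defining $\mathcal{Q}:=\{q:\rho_t^*(q)\le0\}=\{q:\langle c,q\rangle_{\mathcal{Q}}\le\rho_t(c)\ \forall c\}$ gives $\rho_t(c)=\sup_{q\in\mathcal{Q}}\langle c,q\rangle_{\mathcal{Q}}$. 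It remains to see $\mathcal{Q}\subseteq\mathcal{P}$: the set is closed and convex as an intersection of closed half-spaces (and nonempty, else $\rho_t(0)=-\infty$); monotonicity forces $q\ge0$ on $\mathcal{Q}$, since if $q$ were negative at some state, taking $c\preceq0$ supported there makes $\langle c,q\rangle_{\mathcal{Q}}$ arbitrarily large while $\rho_t(c)\le\rho_t(0)=0$; and translational invariance, read through the dual representation as $\sup_{q\in\mathcal{Q}}\big(\langle c,q\rangle_{\mathcal{Q}}+c'\langle\mathbf{1},q\rangle_{\mathcal{Q}}\big)=\sup_{q\in\mathcal{Q}}\langle c,q\rangle_{\mathcal{Q}}+c'$ for every constant $c'$ (take $c=0$ and let $c'$ be positive, then negative), forces $\langle\mathbf{1},q\rangle_{\mathcal{Q}}=\sum_{s'}q(s')\mathbb{P}(s')=1$ on $\mathcal{Q}$.

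The hardest part is the duality step $\rho_t=\rho_t^{**}$: in full generality one must choose a topology on $\mathcal{C}_{t+1}$ making the pairing with densities a genuine dual pair and must impose lower semicontinuity (the Fatou property) so that the representing set consists of densities rather than merely finitely additive set functions --- which is precisely why the cited Proposition assumes $\mathcal{Q}$ \emph{closed}. In this paper's finite, discrete state space everything is finite-dimensional, $\rho_t$ is real-valued and automatically continuous, and the argument collapses to the elementary separating-hyperplane / biconjugate theorem, so these functional-analytic subtleties do not arise.
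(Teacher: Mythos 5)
Your proposal is correct, but note that the paper itself offers no proof of this proposition: it is imported verbatim as Proposition~4.14 of F\"ollmer and Schied's book, so there is no in-paper argument to compare against. What you have written is essentially the standard textbook proof that underlies the citation. The sufficiency direction (supremum of the linear functionals $c \mapsto \langle c,q\rangle_{\mathcal{Q}}$ inherits convexity, monotonicity from $q \ge 0$, positive homogeneity, and translational invariance from the normalization $\sum_{s'}q(s')\mathbb{P}(s')=1$) is a complete and routine verification. The necessity direction via Fenchel--Moreau biconjugation is also sound in the setting actually used in the paper: with $\mathcal{S}$ finite the risk measure is a finite convex function on a finite-dimensional space, hence continuous, so $\rho_t=\rho_t^{**}$ holds without further hypotheses; positive homogeneity correctly forces $\rho_t^*\in\{0,+\infty\}$, and your arguments that monotonicity forces $q\ge 0$ (a negative coordinate would let $\langle c,q\rangle_{\mathcal{Q}}$ exceed $\rho_t(c)\le 0$ for suitable $c\preceq 0$) and that translational invariance applied to constants pins $\langle 1,q\rangle_{\mathcal{Q}}=1$ on the representing set are the right ones. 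Two small remarks: the ``if and only if'' in the statement should be read, as you implicitly do, as existence of \emph{some} closed convex $\mathcal{Q}\subseteq\mathcal{P}$ in the necessity direction (your $\mathcal{Q}=\{q:\rho_t^*(q)\le 0\}$ is closed and convex as an intersection of half-spaces); and your closing comment tying the closedness hypothesis to the Fatou property is a bit loose---closedness matters for the canonical choice of the representing set and for attainment, while the Fatou property is what rules out merely finitely additive dual elements in the genuinely infinite-dimensional case---but this is side commentary and does not affect the correctness of the argument in the finite-state setting relevant here.
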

\vspace{0.1cm}



Hereafter, all risk measures are assumed to be coherent.




\section{Problem formulation}

Next, we formally describe the risk-averse SSP problem. We also demonstrate that, if the goal state is reachable in finite time, the risk-averse SSP problem is well-posed. Let $\pi = \{\pi_0,\pi_1,\ldots\}$ be an admissible policy.

\vspace{0.1cm}
\begin{problem}
\textit{Consider MDP $\mathcal{M}$ as described in Definition 1. Given an initial state $s_0 \neq s^g$, we are interested in solving the following problem
\begin{align}\label{eq:problem}
\pi^* \in ~\argmin_{\pi} ~~J(s_0,\pi),
\end{align}
where  
\begin{equation} \label{cveerfdd}
J(s_0,\pi) = \lim_{T \to \infty} \rho_{t:T}\left(c(s_0,\pi_0),\ldots,c(s_T,\pi_T)\right),
\end{equation}
is the total risk functional for the admissible policy $\pi$.
}
\end{problem}
\vspace{0.1cm}
In fact, we are interested in reaching the goal state $s^g$ such that the total risk cost is minimized\footnote{An important class of SSP planning problems are concerned with minimum-time reachability. Indeed, our formulation also encapsulates minimum-time problems, in which for MDP $\mathcal{M}$, we have $c(s)=1$, for all $s \in \mathcal{S}\setminus \{s^g\}$. }. Note that the risk-averse deterministic shortest problem can be obtained as a special case when the transitions are deterministic. We define the optimal risk value function as 
$$
J^*(s)=\min_\pi~J(s,\pi),~~\forall s \in \mathcal{S},
$$
and call a stationary policy $\pi = \{\mu,\mu,\ldots\}$ (denoted $\mu$) optimal if $J(s,\mu)=J^*(s)=\min_\pi~J(s,\pi),~~\forall s \in \mathcal{S}$.

We posit the following assumption, which implies that the goal state is reachable eventually under all policies. 
\vspace{0.1cm}
\begin{assum}[Goal is Reachable in Finite Time] \textit{Regardless of the policy used and the initial state, there exists an integer $\tau$ such that there is a positive probability that the goal state $s^g$ is visited after no more than $\tau$ stages\footnote{If instead of one goal state $s^g$, we were interested in a set of goal states $\mathcal{G} \subset \mathcal{S}$, it suffices to define $\tau = \inf \{t \mid \mathbb{P}(s_t \in \mathcal{G}\mid s_0\in\mathcal{S}\setminus \mathcal{G},\pi)>0\}$. Then, all the paper's derivations can be applied. For the sake of simplicity of the presentation, we present the results for a single goal state.}.}
\end{assum}
\vspace{0.1cm}

We then have the following observation with respect to Problem 1\footnote{Note that Problem 1 is ill-posed in general. For example, if the induced Markov chain for an admissible policy is periodic, then the limit in~\eqref{cveerfdd} may not exist. This is in contrast to risk-averse discounted infinite-horizon MDPs~\cite{ahmadi2020constrained}, for which we only require non-negativity and boundedness of immediate costs.}.
\vspace{0.1cm}
\begin{proposition}
\textit{Let Assumption~1 hold. Then, the risk-averse SSP problem, \textit{i.e.,} Problem 1, is well-posed and $J(s_0,\pi)$ is bounded for all policies $\pi$.}
\end{proposition}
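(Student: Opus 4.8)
The plan is to argue in two stages. First I would show that the limit defining $J(s_0,\pi)$ in~\eqref{cveerfdd} always exists in $[0,+\infty]$ for every admissible $\pi$ (this is the ``well‑posedness'' part), and then use Assumption~1 to show that this limit is actually finite, and in fact uniformly bounded over all policies.

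\emph{Existence of the limit.} I would first check that $T\mapsto\rho_{0:T}\big(c(s_0,\pi_0),\dots,c(s_T,\pi_T)\big)$ is non‑decreasing. Positive homogeneity gives $\rho_t(0)=0$, so since $c(s_{T+1},\pi_{T+1})\ge 0$, monotonicity of $\rho_T$ yields $\rho_T\big(c(s_T,\pi_T)+\rho_{T+1}(c(s_{T+1},\pi_{T+1}))\big)\ge\rho_T(c(s_T,\pi_T))$; feeding this inequality up through the remaining monotone one‑step maps in~\eqref{eq:dynriskmeasure} gives $\rho_{0:T+1}\ge\rho_{0:T}$. A monotone sequence in the extended reals converges, so the limit exists; together with the finiteness bound below this makes Problem~1 well‑posed.

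\emph{A summable upper bound.} Because $c(s_0,\pi_0)$ is deterministic and each $c(s_t,\pi_t)$ is $\mathcal F_t$‑measurable, translational invariance pulls $c(s_t,\pi_t)$ out of $\rho_t$, and subadditivity (a consequence of convexity and positive homogeneity) together with monotonicity lets one distribute the outer maps; iterating~\eqref{eq:dynriskmeasure} this way gives
\[
\rho_{0:T}\big(c(s_0,\pi_0),\dots,c(s_T,\pi_T)\big)\ \le\ \sum_{t=0}^{T}\big(\rho_0\circ\cdots\circ\rho_{t-1}\big)\!\big(c(s_t,\pi_t)\big).
\]
Since $c(s^g,\cdot)=0$ and $c\le\bar c$, monotonicity and positive homogeneity give $\big(\rho_0\circ\cdots\circ\rho_{t-1}\big)(c(s_t,\pi_t))\le\bar c\,\beta_t$, where $\beta_t:=\big(\rho_0\circ\cdots\circ\rho_{t-1}\big)\big(\mathbf 1[s_t\neq s^g]\big)$, so it remains to bound $\sum_t\beta_t$. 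Here I would use the Markov structure: by the dual representation~\eqref{eq:dual} the one‑step $\rho_t$ in this MDP depends only on $s_t$, so $\rho_t\circ\cdots\circ\rho_{t+\tau-1}$ applied to $\mathbf 1[s_{t+\tau}\neq s^g]$ is an $\mathcal F_t$‑measurable function $h(s_t)$ equal to the worst case, over one‑step reweightings drawn from the fixed envelope $\mathcal Q$ during stages $t,\dots,t+\tau-1$, of $\mathbb P(s_{t+\tau}\neq s^g\mid s_t)$; in particular $h(s^g)=0$.

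\emph{The crux.} The argument closes once one exhibits a constant $\eta\in[0,1)$, independent of $\pi$, of the stage and of the state, with $h(s)\le\eta$ for all $s\neq s^g$ --- equivalently, the largest $\tau$‑step non‑absorption probability, taken over all policies and all reweightings allowed by the risk envelope, is strictly below one. Granting this, $h\le\eta\,\mathbf 1[\,\cdot\neq s^g]$, so monotonicity and positive homogeneity of $\rho_0\circ\cdots\circ\rho_{t-1}$ give $\beta_{t+\tau}\le\eta\,\beta_t$; since $\beta_0=1$ and $\beta_t\le 1$ (normalization of coherent risk), $\beta_t\le\eta^{\lfloor t/\tau\rfloor}$ and hence $J(s_0,\pi)\le\bar c\sum_{t\ge0}\beta_t\le\bar c\,\tau/(1-\eta)<\infty$, uniformly in $\pi$. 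Proving $\eta<1$ is the main obstacle. In the risk‑neutral case it is routine: Assumption~1 gives $\mathbb P(s_\tau=s^g\mid s_0=s,\pi)>0$ for all $s\neq s^g$ and all $\pi$, and since $\mathcal S$ and $Act$ are finite while the $\tau$‑stage policy fragments range over a compact set on which this probability is continuous, the supremum of the complementary probability is attained and lies in $[0,1)$. In the risk‑averse case the same compactness reasoning applies --- the reweightings live in the fixed closed convex set $\mathcal Q$ of~\eqref{eq:dual}, and the worst‑case $\tau$‑step non‑absorption probability is continuous in the (policy fragment, reweighting) pair over a compact domain --- so the real content is certifying that no reweighting inside $\mathcal Q$ can push this worst case all the way to $1$; this is exactly where Assumption~1 and the coherence of $\rho$ (notably that every element of $\mathcal Q$ is an honest probability density, normalized against the true transition law) must be used together, and it is the step I would expect to require the most care.
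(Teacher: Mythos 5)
Your architecture is essentially the paper's: split the horizon into blocks of length $\tau$, get a geometric factor $p^k$ (with $p<1$ from Assumption~1 plus finiteness of the $\tau$-stage policy fragments), and sum the blocks using subadditivity and monotonicity of coherent risk measures -- this is exactly what the paper does in \eqref{eq:idk1} and \eqref{eqsdsdsds} (your monotone-in-$T$ observation for existence of the limit is a small addition the paper leaves implicit). The problem is that you stop precisely at the decisive estimate. Your reduction requires a uniform $\eta<1$ with $h(s)=\left(\rho_t\circ\cdots\circ\rho_{t+\tau-1}\right)\left(\mathbf{1}[s_{t+\tau}\neq s^g]\right)\le\eta$ for all $s\neq s^g$, i.e.\ a bound strictly below one on the \emph{risk-reweighted} $\tau$-step non-absorption probability, and you explicitly leave the certification ``that no reweighting inside $\mathcal Q$ can push this worst case all the way to $1$'' as the step requiring the most care. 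That is a genuine gap, not a routine compactness remark to be filled in: Assumption~1 only constrains the nominal law $\mathbb P$, and for a coherent risk measure admissible in this framework, such as the worst case (dual set $\mathcal Q=\mathcal P$, the $\varepsilon\to0$ limit of EVaR), one has $h(s)=1$ whenever some positive-probability $\tau$-step path from $s$ avoids $s^g$; then no $\eta<1$ exists, your series bound collapses, and indeed the nested worst-case total cost of a two-state example (stay with probability $1/2$, cost $1$ per stage) diverges even though Assumption~1 holds with $\tau=1$. So whatever closes the argument must use more than abstract coherence together with Assumption~1, and your proposal does not identify that extra ingredient.

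For comparison, the paper closes this step inside \eqref{eq:idk1}: it bounds the block-$k$ nested risk by evaluating the dual pairing \eqref{eq:dual} at a maximizer $q^*$, inserting the nominal probabilities $\mathbb P(s_{k\tau}\neq s^g\mid s_0=s,\pi)$, and invoking H\"older together with \eqref{fcdfdscsdfc} to conclude the block risk is at most $\tau\bar c\,p^k$ -- in other words, it asserts that the nominal decay $p^k$ survives the supremum over $q\in\mathcal Q$ because each $q$ is a density normalized against $\mathbb P$. Whether or not you regard that chain as airtight, it is exactly the estimate your write-up postpones; as submitted, your proof reduces the proposition to its hardest claim without proving it, so it is incomplete.
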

\vspace{0.1cm}
\begin{proof}
Assumption 1 implies that for each admissible policy $\pi$, we have
$
p_\pi =\max_{s\in \mathcal{S}} \mathbb{P}(s_\tau \neq s^g \mid s_0=s,\pi) < 1.
$ 
That is, given a policy $\pi$, the probability $p_\pi$ of not visiting the goal state $s^g$ is less than one. Let $p = \max_{\pi} p_\pi$. Remark that $p_\pi$ is dependent solely on $\{\pi_1,\pi_2,\ldots,\pi_{\tau}\}$. Moreover, since $Act$ is finite, the number of $\tau$-stage policies is also finite, which implies finiteness of $p_\pi$. Hence, $p<1$ as well. Therefore, for any policy $\pi$ and initial state $s$, we obtain 
$
 \mathbb{P}(s_{2\tau} \neq s^g \mid s_0=s,\pi) =   \mathbb{P}(s_{2\tau} \neq s^g \mid s_\tau \neq s^g, s_0=s,\pi)  \times  \mathbb{P}(s_\tau \neq s^g \mid s_0=s,\pi) \le p^2.
$ 
Then, by induction, we can show that, for any admissible SSP policy $\pi$, we have
\begin{equation} \label{fcdfdscsdfc}
 \mathbb{P}(s_{k\tau} \neq s^g \mid s_0=s,\pi) \le p^k,\quad \forall s \in \mathcal{S},
\end{equation}
and $k=1,2,\ldots$. 
Indeed, we can show that the risk-averse cost incurred in the $\tau$ periods between $\tau k$ and $\tau (k+1)-1$ is bounded as follows
\begin{subequations} \label{eq:idk1}
\begin{align}
    ~& \rho_{0}  \big(  \cdots
    \rho_{\tau (k+1)-1}(c_{\tau k}+\cdots+c_{\tau (k+1)-1})  \cdots  \big)  \\ &= \tilde{\rho}(c_{\tau k}+c_{\tau k+1}+\cdots+c_{\tau (k+1)-1}) \label{dwe1} \\  &\le \tilde{\rho}(\bar{c}+\cdots+\bar{c}) \label{dwef2} \\ &=  \sup_{q \in \mathcal{Q}}~\langle \tau \bar{c}, q \rangle_{\mathcal{Q}} \label{dwef3} \\ 
    &\le \langle \tau \bar{c}, q^* \rangle_{\mathcal{Q}}  \label{dwef4} \\
    &= \tau \bar{c}  \sum_{s \in \mathcal{S}} \mathbb{P}(s_{k\tau} \neq s^g \mid s_0=s,\pi) q^*(s,\pi) \\
    &\le \tau \bar{c}  \times \sup \left( \mathbb{P}(s_{k\tau} \neq s^g \mid s_0=s,\pi) \right)  \sum_{s\in \mathcal{S}} |q^*(s,\pi) | \label{dwef5} \\ &
    \le \tau~\bar{c}~p^k,
\end{align}
\end{subequations}
where in~\eqref{dwe1} we used the translational invariance property of coherent risk measures and defined $\tilde{\rho} = \rho_{0} \circ \cdots \circ \rho_{\tau (k+1)-1} $. Since any finite compositions of the coherent risk measures is a risk measure~\cite{shapiro2014lectures}, we have that $\tilde{\rho}$ is also a coherent risk measure. Moreover, since the immediate cost function $c$ is upper-bounded, from the monotonicity property of the coherent risk measure $\tilde{\rho}$, we obtain~\eqref{dwef2}. Equality~\eqref{dwef3} is derived from Proposition 1. Inequality~(7e) is obtained via defining $q^* = \mathrm{argsup}_{q \in \mathcal{Q}} \langle \tau \bar{c}, q \rangle_{\mathcal{Q}}$. In inequality~\eqref{dwef5}, we used H\"older inequality and finally we used~\eqref{fcdfdscsdfc} to obtain the last inequality. Thus, the  risk-averse total cost $J(s,\pi)$, $s \in \mathcal{S}$, exists and is finite, because given Assumption 1 we have
\begin{equation} \label{eqsdsdsds}
\begin{aligned}
    |J(s_0,\pi)| &= \lim_{T \to\infty}  \rho_{0}  \circ  \cdots \circ \rho_{\tau-1} \circ \cdots \circ \rho_T(c_{0}+\cdots+c_T)  
    \\ \le &\sum_{k=0}^\infty \rho_{0}  \big(  \cdots
    \rho_{\tau (k+1)-1}(c_{\tau k}+\cdots+c_{\tau (k+1)-1})  \cdots  \big) \\
    \le &\sum_{k=0}^\infty \tau~\bar{c}~p^k = \frac{\tau \bar{c}}{1-p},
    \end{aligned}
    \end{equation}
    where in the first equality above we used the  translational invariance property  and in the first inequality we used the sub-additivity property of coherent risk measures. Hence, $J(s_0,\pi)$ is bounded for all $\pi$.
\end{proof}
\section{Risk-Averse SSP Planning}

This section presents the  paper's main result, which includes a special Bellman's equation for finding the risk value functions for Problem 1. Furthermore, assuming that the coherent risk measures satisfy a Markovian property, we show that the optimal risk-averse policies are stationary and Markovian.

To begin with, note that at any time $t$, the value of $\rho_t$ is $\mathcal{F}_t$-measurable and is allowed to depend on the entire history of the process $\{s_0,s_1,\ldots\}$ and we cannot expect to obtain a Markov optimal policy~\cite{ott2010markov}. In order to obtain Markov optimal policies for Problem 1, we need the following property~\cite[Section 4]{ruszczynski2010risk} of risk measures.
\vspace{0.1cm}
\begin{defn}[Markov Risk Measure~\cite{fan2018process}]\label{assum1}\textit{
A one-step conditional risk measure $\rho_t:\mathcal{C}_{t+1}\to \mathcal{C}_t$ is a Markov risk measure with respect to MDP~$\mathcal{P}$, if there exist a risk transition mapping $\sigma_t: \mathcal{L}_m(\mathcal{S}, 2^\mathcal{S}, \mathbb{P}) \times \mathcal{S} \times \mathcal{M} \to \mathbb{R}$ such that for all $v \in \mathcal{L}_m(\mathcal{S}, 2^\mathcal{S}, \mathbb{P})$ and $\alpha_t \in \pi(s_t)$, we have
\begin{equation}
    \rho_t(v(s_{t+1})) = \sigma_t\left(v(s_{t+1}),s_t,T(s_{t+1}|s_t,\alpha_t)\right).
\end{equation}
}
\end{defn}
\vspace{0.1cm}

In fact, if $\rho_t$ is a coherent risk measure, $\sigma_t$ also satisfies the properties of a coherent risk measure (Definition 4). 

\vspace{0.1cm}
\begin{assum}\label{assum1}\textit{
The one-step coherent risk measure $\rho_t$ is a Markov risk measure.
}
\end{assum}
\vspace{0.1cm}

We can now present the main result in the paper, a form of Bellman's equations for solving the risk-averse SSP problem.
\vspace{0.05cm}
\begin{theorem}
\textit{Consider MDP $\mathcal{P}$ as described in Definition~1 and let Assumptions~1 and 2 hold. Then, the following statements are true for the risk-averse SSP problem:\\
\textbf{(i)} Given (non-negative) initial condition $J^0(s), s \in \mathcal{S}$,  the sequence generated by the recursive formula (dynamic programming)
    \begin{multline}\label{eq:SSPVIDP}
        J^{k+1}(s) =  \min_{\alpha \in Act} \bigg( c(s,\alpha) \\+ \sigma \left\{ J^k(s'),s,T(s'|s,\alpha) \right\} \bigg),~~\forall s \in\mathcal{S},
    \end{multline}
     converges to the optimal risk value function $J^*(s)$, $s \in \mathcal{S}$.\\
\textbf{(ii)} The optimal risk value functions $J^*(s)$, $s \in \mathcal{S}$ are the unique solution to the Bellman's equation
\begin{multline} \label{eq:SSPBellman}
            J^*(s) =  \min_{\alpha \in Act} \bigg( c(s,\alpha) \\+ \sigma \left\{ J^*(s'),s,T(s'|s,\alpha) \right\} \bigg),~~\forall s \in\mathcal{S};
\end{multline}
\textbf{(iii)} For any stationary Markovian policy $\mu$, the risk averse value functions $J(s,\mu(s))$, $s \in \mathcal{S}$ are the unique solutions to 
\begin{multline} \label{eq:SSPPI}
            J(s,\mu) =   c(s,\mu(s)) \\+ \sigma \left\{ J(s,\mu),s,T(s'|s,\alpha) \right\} ,~~\forall s \in\mathcal{S};
\end{multline}
\textbf{(iv)} A stationary Markovian policy $\mu$ is optimal if and only if  $\mu$ attains the minimum in Bellman's equation~\eqref{eq:SSPBellman}.
}
\end{theorem}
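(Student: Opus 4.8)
The plan is to recast \eqref{eq:SSPVIDP}--\eqref{eq:SSPPI} as fixed-point equations for two operators and then mimic the classical dynamic-programming analysis of stochastic shortest path problems (\`a la Bertsekas--Tsitsiklis), with the conditional expectation replaced by the Markov risk transition mapping $\sigma$. On the set $\mathcal V = \{J\in\mathbb{R}^{|\mathcal S|} : J(s^g)=0\}$ define $(\mathcal T J)(s) = \min_{\alpha\in Act}\big(c(s,\alpha) + \sigma\{J(s'),s,T(s'|s,\alpha)\}\big)$ and, for a stationary Markov policy $\mu$, $(\mathcal T_\mu J)(s) = c(s,\mu(s)) + \sigma\{J(s'),s,T(s'|s,\mu(s))\}$; since $T(s^g|s^g,\alpha)=1$ and $c(s^g,\alpha)=0$, both map $\mathcal V$ into itself. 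Because $\sigma$ inherits monotonicity, translational invariance, positive homogeneity and sub-additivity from $\rho$ (the remark preceding the theorem and Definition~4), $\mathcal T$ and $\mathcal T_\mu$ are monotone and $(\mathcal T(J+a\mathbf 1))(s) = (\mathcal T J)(s) + a$ for scalars $a$ on $\mathcal S\setminus\{s^g\}$. These facts and the dual representation of Proposition~1 are the only structural inputs.

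The crux is a $\tau$-stage sup-norm contraction: $\|\mathcal T^\tau J - \mathcal T^\tau J'\|_\infty \le p\,\|J-J'\|_\infty$ and $\|\mathcal T_\mu^\tau J - \mathcal T_\mu^\tau J'\|_\infty \le p\,\|J-J'\|_\infty$ for all $J,J'\in\mathcal V$, with $p<1$ the constant built in the proof of Proposition~2. To get it I would expand the $\tau$-fold composition, writing each $\sigma$ via Proposition~1 as a supremum over a closed, bounded ``risk envelope'' of reweighted one-step transition kernels; because $s^g$ is absorbing and cost-free with $J(s^g)=J'(s^g)=0$, every sample path that has reached $s^g$ within $\tau$ steps contributes zero to $(\mathcal T_\mu^\tau J)(s) - (\mathcal T_\mu^\tau J')(s)$, so this difference is at most $\|J-J'\|_\infty$ times the supremum, over the envelope, of the probability of \emph{not} reaching $s^g$ in $\tau$ steps. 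Assumption~1, with $Act$ finite and the envelope compact, forces that quantity to be $\le p<1$, exactly as in the chain \eqref{eq:idk1}; the bound for $\mathcal T^\tau$ follows by choosing the minimizing actions identically for $J$ and $J'$. I expect this to be the main obstacle: it is the single place where the missing discount factor must be replaced by transience, and one must verify that the reweighted kernels in the envelope still place enough mass on the paths hitting $s^g$ for Assumption~1 to bite uniformly.

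Given the $\tau$-stage contraction, the rest is routine. Banach's theorem gives $\mathcal T^\tau$ (resp.\ $\mathcal T_\mu^\tau$) a unique fixed point in $\mathcal V$, which is then also the unique fixed point $\bar J$ of $\mathcal T$ (resp.\ $\bar J_\mu$ of $\mathcal T_\mu$), and $\mathcal T^k J^0 \to \bar J$ from every $J^0\in\mathcal V$ by writing $k = q\tau + r$ and iterating the contraction on the subsequences. For~(iii): unrolling the nested functional \eqref{eq:dynriskmeasure} for stationary Markov $\mu$ and using Assumption~2 shows the $T$-horizon risk cost started at $s$ equals $(\mathcal T_\mu^{T}\mathbf 0)(s)$; letting $T\to\infty$ and invoking Proposition~2, $J(\cdot,\mu) = \lim_T \mathcal T_\mu^T\mathbf 0 = \bar J_\mu$, the unique solution of \eqref{eq:SSPPI}. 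To identify $\bar J$ with $J^*$: (a) for any admissible, possibly history-dependent $\pi$, peel \eqref{eq:dynriskmeasure} one stage at a time --- translational invariance pulls out $c(s_t,\alpha_t)$, Assumption~2 makes $\rho_t$ equal to $\sigma$ evaluated at $(s_t,\alpha_t)$, the inner term dominates $(\mathcal T^{\,\cdot}\mathbf 0)(s_{t+1})$, and monotonicity propagates $\rho_{0:T}(\text{costs})\ge(\mathcal T^{T}\mathbf 0)(s_0)$, so in the limit $J(s_0,\pi)\ge\bar J(s_0)$; (b) with $\mu^*$ attaining the minimum in \eqref{eq:SSPBellman}, $\mathcal T_{\mu^*}\bar J = \mathcal T\bar J = \bar J$, so by~(iii) $J(\cdot,\mu^*)=\bar J$. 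Hence $\bar J = \min_\pi J(\cdot,\pi) = J^*$, attained by the stationary Markov policy $\mu^*$; this establishes the convergence in~(i), the uniqueness in~(ii), and~(iii).

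Finally, for~(iv): if $\mu$ attains the minimum in \eqref{eq:SSPBellman} then $\mathcal T_\mu J^* = \mathcal T J^* = J^*$, so $J^*$ solves \eqref{eq:SSPPI}; by the uniqueness in~(iii), $J(\cdot,\mu)=J^*$, i.e.\ $\mu$ is optimal. Conversely, if $\mu$ is optimal then $J(\cdot,\mu)=J^*$ is by~(iii) the solution of \eqref{eq:SSPPI}, so $J^*(s) = c(s,\mu(s)) + \sigma\{J^*(s'),s,T(s'|s,\mu(s))\} \ge \min_{\alpha}\big(c(s,\alpha) + \sigma\{J^*(s'),s,T(s'|s,\alpha)\}\big) = J^*(s)$ for every $s$ by~(ii), forcing equality throughout, so $\mu(s)$ attains the minimum in \eqref{eq:SSPBellman}.
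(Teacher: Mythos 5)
Your overall architecture (operators $\mathcal{T}$, $\mathcal{T}_\mu$, a multi-stage contraction, Banach fixed point, then exchange arguments over history-dependent policies) is a genuinely different route from the paper, which never proves any contraction property: the paper splits the nested functional \eqref{eq:dynriskmeasure} at time $\tau M$, bounds the tail cost $J(s_{\tau M},\pi)$ and the contribution of the terminal condition $J^0(s_{\tau M})$ geometrically using the estimate from Proposition~2 and the dual representation, sandwiches the resulting $\tau M$-stage cost around $J(s_0,\pi)$ up to an error $p^M\max_s|J^0(s)|+\tau\bar{c}p^M/(1-p)$, identifies the minimum of that finite-horizon cost with $J^{\tau M}$ from \eqref{eq:SSPVIDP} by citing the finite-horizon dynamic-programming theorem of Ruszczy\'nski, and lets $M\to\infty$; uniqueness in (ii) then comes from feeding a putative fixed point into the recursion, and your parts (iii)--(iv) essentially coincide with the paper's.

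However, the step you yourself identify as the crux --- the $\tau$-stage sup-norm contraction $\|\mathcal{T}^\tau J-\mathcal{T}^\tau J'\|_\infty\le p\,\|J-J'\|_\infty$ with $p$ the constant of Proposition~2 --- is a genuine gap: it does not follow from Assumption~1, and it is false in general. Assumption~1 (and hence $p$) constrains only the nominal kernel $T$, whereas after dualizing $\sigma$ via Proposition~1 the quantity controlling the contraction is the \emph{reweighted} probability of missing $s^g$ within $\tau$ steps, with weights ranging over the risk envelope $\mathcal{Q}$, and the envelope is allowed to inflate that probability. Concretely, take a single non-goal state $s$ with one action, $T(s^g|s,\alpha)=1-\delta$, $T(s|s,\alpha)=\delta$, and $\rho=\mathrm{CVaR}_\varepsilon$ with $\varepsilon\le\delta$; here $\tau=1$ and $p=\delta<1$. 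For $J\ge 0$ with $J(s^g)=0$ one has $\sigma\{J(s'),s,T\}=\mathrm{CVaR}_\varepsilon\big(J(s_1)\big)=J(s)$, because the worst-case density concentrates its (normalized) mass on the self-loop. Hence $\mathcal{T}_\mu J(s)-\mathcal{T}_\mu J'(s)=J(s)-J'(s)$: the modulus is $1$, not $\delta$, and no power of the operator contracts, so Banach's theorem --- which is what delivers your existence, uniqueness and convergence claims in (i)--(iii) --- is unavailable. The appeal to the chain \eqref{eq:idk1} does not close this: that chain bounds the risk of a block of \emph{costs} using the nominal miss probability, and does not establish a uniform bound below one on the reweighted miss probability over the envelope, which is what a contraction argument needs. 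To salvage your route you would have to strengthen Assumption~1 to a transience condition on the risk (reweighted) kernels themselves, or else abandon the contraction and argue along the specific nested composition as the paper does.
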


\begin{proof}
For every positive integer $M$, an initial state $s_0$, and policy $\pi$, we can split the nested risk cost~\eqref{cveerfdd}, where $\rho_{t,T}$ is defined in~\eqref{eq:dynriskmeasure},  at time index $\tau M$ and  obtain
\begin{multline} \label{dssddssds}
    J(s_0,\pi) = \rho_0 \big( c_0 + \cdots +
    \rho_{\tau M-1}(c_{\tau M-1}  \\ + \lim_{T\to\infty}
    \rho_{\tau M}\big(c_{\tau M} + \cdots+\rho_T(c_{T})  \cdots ))  \big),
\end{multline}
where we used the fact that the one-step coherent risk measures $\rho_t$ are continuous~\cite[Corollary 3.1]{ruszczynski2006optimization} and hence the limiting process and the measure $\rho_t$ commute. The limit term is indeed the total risk cost starting at $s_{\tau M}$, \textit{i.e.,} $J(s_{\tau M},\pi)$. 
Next, we show that under Assumption 1, this  term remains bounded. From~\eqref{eqsdsdsds} in the proof of Proposition 2, we have
\begin{align} \label{sdsssq2}
    |J(s_{\tau M}, & \pi)| = \lim_{T\to\infty}
    \rho_{\tau M}  \big(c_{\tau M} \cdots  +\rho_T(c_{T})\cdots) \nonumber \\& \le \sum_{k= M}^\infty
     \rho_{\tau k}\big(c_{\tau k}+\cdots+\rho_{\tau (k+1)-1}(c_{\tau (k+1)-1})\cdots\big) \nonumber \\
    &\le \sum_{k=M}^\infty \tau~\bar{c}~p^k = \frac{\tau \bar{c}~p^M}{1-p}.
    \end{align}
Substituting the above bound in~\eqref{dssddssds} gives
\begin{align}\label{sdce3}
    J(& s_0,\pi) \le \rho_0 \big(c_0 + \cdots +
    \rho_{\tau M-1}(c_{\tau M-1}  + \frac{\tau \bar{c}~p^M}{1-p})\cdots \big) \nonumber \\
    &= \rho_0 \big(c_0 + \cdots +
    \rho_{\tau M-1}(c_{\tau M-1})\cdots \big) + \frac{\tau \bar{c}~p^M}{1-p} ,
\end{align}
where the last equality holds via the translational invariance property of the one-step risk measures and the fact that $\frac{\tau \bar{c}~p^M}{1-p}$ is constant.
Similarly, following~\eqref{sdsssq2}, we can also obtain a lower bound on $J( s_0,\pi)$ as follows
\begin{align} \label{vcvddsaw2}
    J( s_0,\pi)  & \ge   \rho_0 \big(c_0 + \cdots +
    \rho_{\tau M-1}(c_{\tau M-1}  - \frac{\tau \bar{c}~p^M}{1-p})\cdots \big) \nonumber \\ &= \rho_0 \big(c_0 + \cdots +
    \rho_{\tau M-1}(c_{\tau M-1}) \big) - \frac{\tau \bar{c}~p^M}{1-p}.
\end{align}
Thus, from~\eqref{sdce3} and~\eqref{vcvddsaw2}, we obtain
\begin{align}\label{dsdsfsdc25}
      J(s_0,\pi)-\frac{\tau \bar{c}~p^M}{1-p} 
 &\le \rho_0 \big(c_0 + \cdots +
    \rho_{\tau M-1}(c_{\tau M-1}))\cdots \big) \nonumber \\
    &\le   J(s_0,\pi)+\frac{\tau \bar{c}~p^M}{1-p}. 
\end{align}
    
Furthermore, Assumption 1 implies that $J^0(s^g)=0$. If we consider $J^0$ as a terminal risk value function, we can obtain
\begin{subequations} \label{dfvd32}
\begin{align} 
    &|\rho_{\tau M}(J^0(s_{\tau M}))| = |\sup_{q \in \mathcal{Q}}~\langle J^0(s_{\tau M}), q \rangle_{\mathcal{Q}}| \\
   & = |\langle J^0(s_{\tau M}), q^* \rangle_{\mathcal{Q}}| \\
   &=  |\sum_{s \in \mathcal{S}} \mathbb{P}(s_{\tau M}=s  \mid s_0,\pi) q^*(s,\pi) J^0(s)| \\
    &\le \sum_{s \in \mathcal{S}} \mathbb{P}(s_{\tau M}=s  \mid s_0,\pi) q^*(s,\pi) \times \max_{s \in \mathcal{S}} |J^0(s)| \\ &\le
    \sum_{s \in \mathcal{S}} \mathbb{P}(s_{\tau M}=s  \mid s_0,\pi) \times \max_{s \in \mathcal{S}} |J^0(s)| \\
    & \le  p^M \max_{s \in \mathcal{S}} |J^0(s)|, 
\end{align}
\end{subequations}
where, similar to the derivation in~\eqref{eq:idk1}, in (18a) we used Proposition 1. Defining $q^* = \mathrm{argsup}_{q \in \mathcal{Q}} \langle \tau \bar{c}, q \rangle_{\mathcal{Q}}$, we obtained (18b) and the last inequality is based on the fact that the probability of $s_{\tau M} \neq s^g$ is less than equal to $p^M$ as in~\eqref{fcdfdscsdfc}. 


Combining inequalities~\eqref{dsdsfsdc25} and \eqref{dfvd32}, we have
\begin{align}
    - &p^M  \max_{s \in \mathcal{S}} |J^0(s)| + J(s_0,\pi)-\frac{\tau \bar{c}~p^M}{1-p} \nonumber \\
 &\le \rho_0 \big(c_0 + \cdots +
    \rho_{\tau M-1}(c_{\tau M-1} +\rho_{\tau M}(J^0(s_{\tau M}))\cdots \big) \nonumber \\
    &\le p^M \max_{s \in \mathcal{S}} |J^0(s)| + J(s_0,\pi)+\frac{\tau \bar{c}~p^M}{1-p}. \label{eq:mainineq1}
\end{align}
Remark that the middle term in the ineqaulity above is the $\tau M$-stage risk-averse cost of the policy $\pi$ with the terminal cost $J^0(s_{\tau M})$. Given Assumption 2, from~\cite[Theorem 2]{ruszczynski2010risk}, the minimum of this cost is generated by the dynamic programming recursion~\eqref{eq:SSPVIDP} after $\tau M$ iterations. Taking the minimum over the policy $\pi$ on every side of~\eqref{eq:mainineq1} yields
\begin{align}
    - &p^M  \max_{s \in \mathcal{S}} |J^0(s)| + J^*(s_0)-\frac{\tau \bar{c}~p^M}{1-p} \nonumber \\
   & \le J^{\tau M}(s_0)   \nonumber \\
    &\le p^M \max_{s \in \mathcal{S}} |J^0(s)| + J^*(s_0)+\frac{\tau \bar{c}~p^M}{1-p}, \label{eq:mainineq2}
\end{align}
for all $s_0$ and $M$. Finally, let $k=\tau M$. Since the above inequality holds for all $M$, taking the limit of $M \to \infty$ gives
\begin{equation} \label{eq322323}
\lim_{M \to \infty} J^{\tau M}(s_0)=\lim_{k \to \infty} J^{k}(s_0)=J^*(s_0), \quad \forall s_0 \in \mathcal{S}.
\end{equation}
\textbf{(ii)} Taking the limit, $k \to \infty$, of both sides of~\eqref{eq:SSPVIDP} yields
 $
        \lim_{k \to \infty}J^{k+1}(s) = \lim_{k \to \infty} \min_{\alpha \in Act} \big( c(s,\alpha) + \sigma \left\{ J^k(s'),s,T(s'|s,\alpha) \right\} \big),~~\forall s \in\mathcal{S}.
$ 
    Equality~\eqref{eq322323} in the proof of Part (i) implies that
        \begin{multline*}
        J^*(s) = \lim_{k \to \infty} \min_{\alpha \in Act} \bigg( c(s,\alpha) \\+ \sigma \left\{ J^k(s'),s,T(s'|s,\alpha) \right\} \bigg),~~\forall s \in\mathcal{S}.
    \end{multline*}
    Since the limit and the minimization commute over a finite number of alternatives, we have
        \begin{multline*}
        J^*(s) =  \min_{\alpha \in Act} \bigg( c(s,\alpha) \\+ \lim_{k \to \infty} \sigma \left\{ J^k(s'),s,T(s'|s,\alpha) \right\} \bigg),~~\forall s \in\mathcal{S}.
    \end{multline*}
    Finally, because $\sigma$ is continuous~\cite[Corollary 3.1]{ruszczynski2006optimization}, the limit and $\sigma$ commute as well and from~\eqref{eq322323}, we obtain 
$ 
        J^*(s) =  \min_{\alpha \in Act} \left( c(s,\alpha) \\+  \sigma \left\{ J^*(s'),s,T(s'|s,\alpha) \right\} \right),~~\forall s \in\mathcal{S}.
 $ 
    To show uniqueness, note that for any $J(s), s\in \mathcal{S}$ satisfying the above equation, the dynamic programming recursion~\eqref{eq:SSPVIDP} starting at $J(s), s\in \mathcal{S}$ replicates $J(s), s\in \mathcal{S}$ and from Part~(i) we infer $J(s)=J^*(s)$ for all $s \in \mathcal{S}$. \\\
\textbf{(iii)} Given a stationary Markovian policy $\mu$, at every state $s$, we have $\alpha = \mu(s)$, hence from Item (i), we have
    \begin{multline*}
        J^{k+1}(s,\mu) =  \min_{\alpha \in \{\mu(s)\}} \bigg( c(s,\alpha) \\+ \sigma \left\{ J^k(s',\mu),s,T(s'|s,\alpha) \right\} \bigg),~~\forall s \in\mathcal{S}.
    \end{multline*}
Since the minimum is only over  one element, we have
$
        J^{k+1}(s,\mu) =    c(s,\mu(s)) + \sigma \left\{ J^k(s',\mu),s,T(s'|s,\alpha) \right\} ,~~\forall s \in\mathcal{S}, 
$ 
    which with $k \to \infty$ converges uniquely (see Item (ii)) to $J(s,\mu)$.\\
\textbf{(iv)} The stationary policy attains its minimum in~\eqref{eq:SSPBellman}, if  
 \begin{multline*}
        J^*(s) =    \min_{\alpha \in \{\mu(s)\}} \bigg( c(s,\alpha) + \sigma \left\{ J^*(s'),s,T(s'|s,\alpha) \right\} \bigg) \\ = c(s,\mu(s)) + \sigma \left\{ J^*(s'),s,T(s'|s,\alpha) \right\} ,~~\forall s \in\mathcal{S}.
    \end{multline*}
Then, Part (iii) and the above equation imply $J(s,\mu) = J^*(s)$ for all $s \in \mathcal{S}$. Conversely, if $J(s,\mu) = J^*(s)$ for all $s \in \mathcal{S}$, then Items (ii) and (iii) imply that $\mu$ is optimal.
\end{proof}
\vspace{0.1cm}

At this point, we should highlight that, in the case of conditional expectation as the coherent risk measure ($\rho_t = \mathbb{E}$ and $\sigma \left\{ J(s'),s,T(s'|s,\alpha)\right\} = \sum_{s' \in \mathcal{S}}T(s'|s,\alpha)J(s')$), Theorem~1 simplifies to~\cite[Proposition 5.2.1]{dimitri1995dynamic} for the risk-neutral SSP problem. In fact, Theorem~1 is a generalization of~\cite[Proposition 5.2.1]{dimitri1995dynamic} to the risk-averse case. 

Recursion (dynamic programming)~\eqref{eq:SSPVIDP} represents the \emph{Value Iteration} (VI) for finding the risk value functions. In general, value iteration converges with infinite number of iterations ($k \to \infty$), but it can be shown, for a stationary policy $\mu$ resulting in an acyclic induced Markov chain, the VI algorithm converges in $|\mathcal{S}|$ of steps (see the derivation for the risk-neutral SSP problem in~\cite{Bertsekas99}). 

Alternatively, one can design risk-averse policies using \emph{Policy Iteration} (PI). That is, starting with an initial policy $\mu^0$, we can carry out \textit{policy evaluation} via~\eqref{eq:SSPPI} followed by a \textit{policy improvement} step, which calculates an improved policy $\mu^{k+1}$, as
 $
    \mu^{k+1}(s) = \argmin_{\alpha \in Act} \left( c(s, \alpha)  + \sigma \left\{ J^{\mu^k}(s,\alpha),s,T(s'|s,\alpha) \right\} \right),~~\forall s \in \mathcal{S}.
 $ 
This process is repeated until no further improvement is found in terms of the risk value functions: $J^{\mu^{k+1}}(s)=J^{\mu^k}(s)$ for all $s \in \mathcal{S}$. 



However, we do not pursue  VI or PI approaches further in this work. The main obstacle for using VI and PI is that equations \eqref{eq:SSPVIDP}-\eqref{eq:SSPPI} are nonlinear (and non-smooth) in the risk value functions for a general coherent risk measure. Solving nonlinear equations~\eqref{eq:SSPVIDP}-\eqref{eq:SSPPI} for the risk value functions may require significant computational burden (see the specialized non-smooth Newton Method in~\cite{ruszczynski2010risk} for solving similar nonlinear VIs). Instead, the next section present a computational method based on difference convex programs (DCPs).

\section{A DCP Computational Approach}

In this section, we propose a computational method based on DCPs to find the risk value functions and subsequently policies that minimize the accrued dynamic risk in the SSP planning. Before stating the DCP formulation, we show that the Bellman operator in~\eqref{eq:SSPVIDP} is non-decreasing. Let
\begin{align*}
\mathfrak{D}_\pi J : &= c(s,\pi(s)) +  \sigma\left( J(s'),s,T(s'|s,\pi(s)) \right), \quad \forall s \in \mathcal{S}, \\
\mathfrak{D} J : &= \min_{\alpha \in Act} \left( c(s,\alpha) +  \sigma\left( J(s'),s,T(s'|s,\alpha) \right) \right),~~\forall s \in \mathcal{S}.
\end{align*}
\begin{lemma}
\textit{Let Assumptions 1 and 2 hold. For all $v,w \in \mathcal{L}_m(\mathcal{S}, 2^\mathcal{S}, \mathbb{P})$, if $v \le w$, then $\mathfrak{D}_\pi v \le \mathfrak{D}_\pi w$ and $\mathfrak{D} v \le \mathfrak{D} w$.}
\end{lemma}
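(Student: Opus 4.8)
The plan is to reduce both inequalities to a single observation: the risk transition mapping $\sigma$ is monotone in its first argument. This is exactly the content of the remark following Definition~5, since under Assumption~2 the one-step measure $\rho_t$ is Markov, so it is represented by $\sigma$ via Definition~5, and because $\rho_t$ is coherent, $\sigma$ inherits the monotonicity axiom of Definition~4.

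First I would fix an arbitrary $s \in \mathcal{S}$ and an arbitrary $\alpha \in Act$ and look at the map $u \mapsto \sigma\{u(s'),s,T(s'|s,\alpha)\}$ on $\mathcal{L}_m(\mathcal{S},2^\mathcal{S},\mathbb{P})$. Monotonicity of $\sigma$ gives: if $v \le w$ (pointwise on $\mathcal{S}$), then $\sigma\{v(s'),s,T(s'|s,\alpha)\} \le \sigma\{w(s'),s,T(s'|s,\alpha)\}$. Adding the (nonnegative) constant $c(s,\alpha)$ to both sides preserves the inequality, so $c(s,\alpha)+\sigma\{v(s'),s,T(s'|s,\alpha)\} \le c(s,\alpha)+\sigma\{w(s'),s,T(s'|s,\alpha)\}$ for every $s \in \mathcal{S}$ and every $\alpha \in Act$.

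Specializing $\alpha = \pi(s)$ in the last inequality yields $\mathfrak{D}_\pi v \le \mathfrak{D}_\pi w$ immediately. For the second claim I would invoke the elementary fact that if $f(\alpha) \le g(\alpha)$ for all $\alpha$ in a common finite index set, then $\min_\alpha f(\alpha) \le \min_\alpha g(\alpha)$; applying this at each $s$ with $f(\alpha) = c(s,\alpha)+\sigma\{v(s'),s,T(s'|s,\alpha)\}$ and $g(\alpha) = c(s,\alpha)+\sigma\{w(s'),s,T(s'|s,\alpha)\}$ gives $\mathfrak{D} v(s) \le \mathfrak{D} w(s)$ for all $s \in \mathcal{S}$, i.e., $\mathfrak{D} v \le \mathfrak{D} w$. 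The finiteness of $Act$ guarantees the minima are attained, so no subtlety arises there.

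The only substantive ingredient — and hence the conceptual ``obstacle,'' although it is already granted in the text preceding Assumption~2 — is the passage from monotonicity of the coherent one-step measure $\rho_t$ to monotonicity of its risk transition mapping $\sigma$; once that is in hand, the proof is pure bookkeeping (stability of inequalities under adding a constant and under taking a minimum over the finite action set).
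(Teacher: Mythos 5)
Your proof is correct, and it reaches the conclusion by a slightly different route than the paper's. Both arguments share the same skeleton: establish $\sigma\{v(s'),s,T(s'|s,\alpha)\}\le\sigma\{w(s'),s,T(s'|s,\alpha)\}$, add the constant $c(s,\alpha)$, then take the minimum over the finite set $Act$. The difference is in how the monotonicity of $\sigma$ is obtained. You invoke it axiomatically: since $\rho_t$ is a Markov coherent risk measure, its risk transition mapping $\sigma$ inherits the coherence properties (the remark following Definition~5), and in particular the monotonicity axiom of Definition~4 applies directly. The paper instead passes through the dual (distributionally robust) representation of Proposition~1, writing $\sigma(v,s,T)=\sup_{q\in\mathcal{Q}}\langle v,q\rangle_{\mathcal{Q}}$ and noting that pairing $v\le w$ with a nonnegative $q\in\mathcal{Q}\subset\mathcal{P}$ and taking the supremum preserves the inequality. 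Your version is shorter and more elementary, at the cost of leaning on the (unproved, but stated) remark that $\sigma$ itself is coherent; the paper's version is self-contained modulo Proposition~1 and makes explicit that the monotonicity comes from $q\ge 0$ in the dual representation, which is the same mechanism the authors reuse elsewhere (e.g., in the bounds \eqref{eq:idk1} and \eqref{dfvd32}). The remaining bookkeeping steps --- preservation of inequalities under adding a constant and under minimization over a finite action set --- are identical in both proofs and handled correctly in yours.
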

\vspace{0.1cm}
\begin{proof}
Since $\rho$ is a Markov risk measure, we have $\rho(v)=\sigma(v,s,T)$ for all $v \in \mathcal{L}_m(\mathcal{S}, 2^\mathcal{S}, \mathbb{P})$. Furthermore, since $\rho$ is a coherent risk measure from Proposition 1, we know that~\eqref{eq:dual} holds. Inner producting both sides of $v \le w$ with the probability measure $q \in \mathcal{Q}\subset \mathcal{P}$ from right and taking the supremum over $\mathcal{Q}$ yields 
$
\sup_{q \in \mathcal{Q}} ~~\langle v, q \rangle_{\mathcal{Q}} \le \sup_{q \in \mathcal{Q}} ~~\langle w, q \rangle_{\mathcal{Q}}.
$ 
From Proposition 1, we have $\sigma(v,s,T)=\sup_{q \in \mathcal{Q}} ~~\langle v, q \rangle_{\mathcal{Q}}$ and $\sigma(w,s,T)=\sup_{q \in \mathcal{Q}} ~~\langle w, q \rangle_{\mathcal{Q}}$. Therefore, 
 $
\sigma(v,s,T) \le \sigma(w,s,T).
$ 
Adding $c$ to both sides of the above inequality, gives $\mathfrak{D}_\pi v \le \mathfrak{D}_\pi w$. Taking the minimum with respect to $\alpha \in Act$ from both sides of $\mathfrak{D}_\pi v \le \mathfrak{D}_\pi w$, does not change the inequality and gives $\mathfrak{D} v \le \mathfrak{D} w$.
\end{proof}
\vspace{0.1cm}

We are now ready to state an optimization formulation to the Bellman equation~\eqref{eq:SSPVIDP}.
\vspace{0.1cm}
\begin{proposition}
\textit{Consider MDP~$\mathcal{M}$ as described in Definition~1. Let the Assumptions of Theorem 1 hold.  Then, the optimal value functions ${J}^*(s)$, $s\in \mathcal{S}$, are the solutions to the following optimization problem 
\begin{align}
  & \sup_{\boldsymbol{J}}~~\sum_{s\in \mathcal{S}}J(s) \label{eq:valueiteration}  \\
        &\text{subject to} \nonumber  \\
        &J(s) \le c(s,\alpha) + \sigma\{ {J}(s'),s,T(s'|s,\alpha) \},~\forall (s , \alpha)  \in \mathcal{S}\times {Act.}
        \nonumber
        \end{align}
}
\end{proposition}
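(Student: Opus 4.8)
The plan is to recognize \eqref{eq:valueiteration} as the familiar linear-programming-style characterization of the Bellman fixed point, now carried over to the nonlinear, nonsmooth operator $\mathfrak{D}$, and to establish it in three moves: (i) every feasible $\boldsymbol{J}$ is dominated pointwise by $\boldsymbol{J}^*$; (ii) $\boldsymbol{J}^*$ is itself feasible; and (iii) these two facts force $\boldsymbol{J}^*$ to be the unique maximizer of the coordinatewise-monotone objective $\sum_{s\in\mathcal{S}}J(s)$.

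For (i), note that the constraints associated with a fixed $s$ assert $J(s) \le c(s,\alpha) + \sigma\{J(s'),s,T(s'|s,\alpha)\}$ for \emph{every} $\alpha \in Act$, so taking the minimum over $\alpha$ gives $J(s) \le \mathfrak{D}J(s)$, i.e. $\boldsymbol{J} \le \mathfrak{D}\boldsymbol{J}$ componentwise. I would first replace $\boldsymbol{J}$ by its truncation $\boldsymbol{J}^+$, $J^+(s) := \max\{J(s),0\}$: using monotonicity of $\sigma$ (Lemma 1 / Proposition 1), $c \ge 0$, and $\sigma\{0,\cdot,\cdot\}=0$ (positive homogeneity, or directly from the dual form \eqref{eq:dual}), one checks that $\boldsymbol{J}^+$ is still feasible and $\boldsymbol{J}^+ \ge \boldsymbol{J}$. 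Now iterate: by Lemma 1 the operator $\mathfrak{D}$ is monotone, so $\boldsymbol{J}^+ \le \mathfrak{D}\boldsymbol{J}^+ \le \mathfrak{D}^2\boldsymbol{J}^+ \le \cdots$; since $\boldsymbol{J}^+ \ge \boldsymbol{0}$, Theorem 1(i) (value iteration \eqref{eq:SSPVIDP} from a non-negative initialization) gives $\mathfrak{D}^k\boldsymbol{J}^+ \to \boldsymbol{J}^*$, and passing to the limit along the monotone chain yields $\boldsymbol{J} \le \boldsymbol{J}^+ \le \boldsymbol{J}^*$.

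For (ii), the Bellman equation \eqref{eq:SSPBellman} from Theorem 1(ii) reads $J^*(s) = \min_{\alpha}\big(c(s,\alpha)+\sigma\{J^*(s'),s,T(s'|s,\alpha)\}\big)$, which immediately implies $J^*(s) \le c(s,\alpha)+\sigma\{J^*(s'),s,T(s'|s,\alpha)\}$ for each individual $\alpha$; hence $\boldsymbol{J}^*$ satisfies all constraints of \eqref{eq:valueiteration} and, being finite by Proposition 2, lies in the feasible set. Combining with (i), $\boldsymbol{J}^*$ dominates every feasible point, so it maximizes $\sum_{s}J(s)$. For uniqueness, if $\hat{\boldsymbol{J}}$ is any optimal solution, then $\hat{\boldsymbol{J}} \le \boldsymbol{J}^*$ componentwise while $\sum_{s}\hat J(s) = \sum_{s}J^*(s)$; since all entries are real and the domination is coordinatewise, this forces $\hat J(s) = J^*(s)$ for every $s \in \mathcal{S}$.

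The main obstacle — really the only nonroutine point — is step (i): because $\mathfrak{D}$ and $\sigma$ are nonlinear and nonsmooth, I cannot invoke LP duality and must instead rely on the monotone-convergence argument, which depends on (a) Lemma 1 to propagate the inequality $\boldsymbol{J}\le\mathfrak{D}\boldsymbol{J}$ through iterates of $\mathfrak{D}$, and (b) the subtle fact that Theorem 1(i) only guarantees convergence of value iteration from a \emph{non-negative} start — which is precisely why the truncation $\boldsymbol{J}\mapsto\boldsymbol{J}^+$ is introduced and must be checked to remain feasible.
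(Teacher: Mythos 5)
Your proof is correct and follows essentially the same route as the paper: the key step in both is Lemma~1's monotonicity of $\mathfrak{D}$, so that any feasible $\boldsymbol{J}$ satisfies $J \le \mathfrak{D}J \le \mathfrak{D}^2 J \le \cdots \to J^*$, while $\boldsymbol{J}^*$ is feasible by the Bellman equation of Theorem~1(ii), making it the largest (hence optimal and unique) feasible point. Your truncation $\boldsymbol{J}\mapsto\boldsymbol{J}^+$ is a sound extra precaution addressing the fact that Theorem~1(i) states value-iteration convergence only from non-negative initializations, a point the paper's own proof passes over when it writes $\mathfrak{D}^\infty J = J^*$; otherwise the two arguments coincide.
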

\vspace{0.1cm}
\begin{proof}
From Lemma 1, we infer that $\mathfrak{D}_\pi$ and $\mathfrak{D}$ are non-decreasing; \textit{i.e.}, for $v\le w$, we have $\mathfrak{D}_\pi v \le \mathfrak{D}_\pi w$ and $\mathfrak{D} v \le \mathfrak{D} w$. Therefore, if $J \le \mathfrak{D} J$, then $\mathfrak{D} J \le \mathfrak{D}(\mathfrak{D} J)$. By repeated application of $\mathfrak{D}$, we obtain 
$
J \le \mathfrak{D} J \le \mathfrak{D}^2 J \le \mathfrak{D}^\infty J=J^*.
$ 
Any feasible solution to~\eqref{eq:valueiteration} must satisfy $J \le \mathfrak{D} J$ and hence must satisfy $J \le  J^*$. Thus, $J^*$ is the largest $J$ that satisfies the constraint in optimization~\eqref{eq:valueiteration}. Hence, the optimal solution to~\eqref{eq:valueiteration} is the same as that of~\eqref{eq:SSPVIDP}.  
\end{proof}
\vspace{0.1cm}

Once a solution $\boldsymbol{J}^*$ to optimization problem~\eqref{eq:valueiteration} is found, we can find a corresponding stationary Markovian policy as
\begin{multline*}
    \mu^*(s) = \argmin_{\alpha \in Act} \bigg( c(s, \alpha)  \\+ \sigma \left\{ J^*(s,\alpha),s,T(s'|s,\alpha) \right\} \bigg),~~\forall s \in \mathcal{S}.
\end{multline*}
\vskip -0.2 true in

\subsection{DCPs for Risk-Averse SSP Planning}
Assumption 1 implies that each $\rho$ is a coherent, Markov risk measure. Hence, the mapping $v \mapsto \sigma(v,\cdot,\cdot)$ is convex (because $\sigma$ is also a coherent risk measure). We next show that optimization problem~\eqref{eq:valueiteration} is in fact a DCP. 

Let $f_0=0$, $g_0(\boldsymbol{J})=\sum_{s\in \mathcal{S}} J(s)$, $f_1(\boldsymbol{J})=J(s)$, $g_1(s,\alpha)=c(s,\alpha)$, and $g_2(\boldsymbol{J})= \sigma(J,\cdot,\cdot)$. Note that $f_0$ and $g_1$ are convex (constant) functions and $g_0$, $f_1$, and $g_2$ are convex functions in $\boldsymbol{J}$. Then, ~\eqref{eq:valueiteration} can be expressed as the minimization
\begin{align}\label{eq:DCP}
      & \inf_{\boldsymbol{J}}~~ f_0-g_0(\boldsymbol{J})  \nonumber  \\
        &\text{subject to} \nonumber  \\
        &f_1(J)-g_1(s,\alpha)-g_2(J) \le 0, ~~\forall s,\alpha.
\end{align}

The above optimization problem is indeed a standard DCP~\cite{horst1999dc}. Many applications require solving DCPs, such as feature selection in  machine learning~\cite{le2008dc} and inverse covariance estimation in statistics~\cite{thai2014inverse}. DCPs can be solved globally~\cite{horst1999dc}, \textit{e.g.} using branch and bound algorithms~\cite{lawler1966branch}. Yet, a locally optimal solution can be obtained based on techniques of nonlinear optimization~\cite{Bertsekas99} more efficiently. In particular, in this work, we use a variant of the convex-concave procedure~\cite{lipp2016variations,shen2016disciplined}, wherein  the concave terms are replaced by a convex upper bound and solved. In fact, the disciplined convex-concave programming (DCCP)~\cite{shen2016disciplined} technique linearizes DCP problems into a (disciplined) convex program (carried out automatically via the DCCP Python package~\cite{shen2016disciplined}), which is then converted into an equivalent cone program by
replacing each function with its graph implementation. Then, the cone program can be solved readily by available convex programming solvers, such as CVXPY~\cite{diamond2016cvxpy}. 

In the Appendix, we present the specific DCPs required for risk-averse SSP planning for  CVaR and EVaR risk measures used in our numerical experiments in the next section. Note that for the risk-neutral conditional expectation measure, optimization~\eqref{eq:DCP} becomes a linear program, since~$\sigma \left\{ J(s'),s,T(s'|s,\alpha)\right\} = \sum_{s' \in \mathcal{S}}T(s'|s,\alpha)J(s')$ is linear in the decision variables $\boldsymbol{J}$.

\section{Numerical Experiments}\label{sec:example}

 \begin{figure}[t]\centering{
\includegraphics[scale=.27]{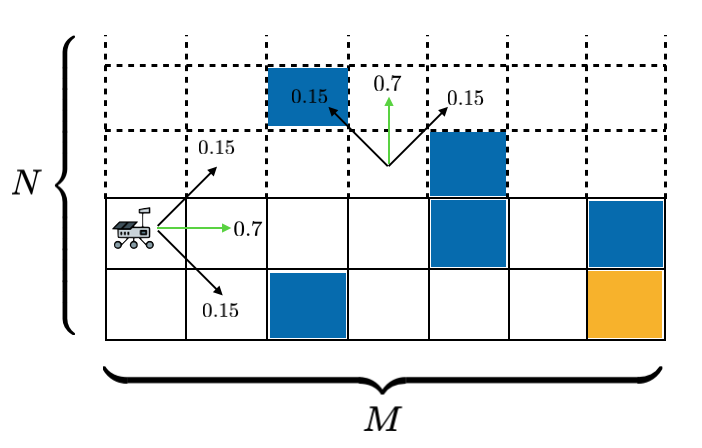}
\vspace{-0.4cm}
\caption{Grid world illustration for the rover navigation example. Blue cells denote the obstacles and the yellow cell denotes the goal.}
\vspace{-.5cm}
} 
\label{fig:gwa_again}
 \end{figure}
 
In this section, we evaluate the proposed method for risk-averse SSP planning with a rover navigation MDP (also used in~\cite{ahmadi2020risk,ahmadi2020constrained}). We consider the traditional total expectation as well as CVaR and EVaR. The experiments were carried out on a MacBook Pro with 2.8 GHz Quad-Core Intel Core i5 and 16 GB of RAM. The resultant linear programs and DCPs were solved using CVX~\cite{diamond2016cvxpy} with DCCP~\cite{shen2016disciplined} add-on.
 
An  agent (e.g. a rover) must autonomously navigate a 2-dimensional terrain map (e.g. Mars surface) represented by an $M \times N$ grid with $0.25 MN$ obstacles. Thus, the state space is given by
 $ \mathcal{S} = \{s^{i}|i=x+y,x\in\{1,\dots,M\},y \in \{1,\dots,N\}\}$ with $x=1,y=0$ being the leftmost bottom grid. 
Since the rover can move from cell to
cell, its action set is $ Act = \{E,\ W,\ N,\ S\}$. 
The actions move the robot from its current cell to a neighboring
cell, with some uncertainty. The state transition probabilities for various cell types are shown for actions $E$ (East) and $N$ (North) in Figure~2.  Other actions lead to similar transitions. Hitting an obstacle incurs the immediate cost  of $5$, while the goal grid region has zero immediate cost. Any other grid has a cost of $1$ to represent fuel consumption. 

Once the policies are calculated, as a robustness test similar to~\cite{chow2015risk,ahmadi2020risk,ahmadi2020constrained}, we included a set of single grid obstacles that are perturbed in a random direction to one of the neighboring grid
 cells with probability $0.2$ to represent uncertainty in the terrain map. For each risk measure, we run $100$ Monte Carlo simulations with the calculated policies and count the number of runs ending in a collision.
 
 In the experiments, we considered three grid-world sizes of $4\times 5$, $10 \times 10$, and $10 \times 20$ corresponding to $20$, $100$, and $200$ states, respectively. We allocated 2, 4, and 8 uncertain (single-cell) obstacles for the $4\times 5$, $10 \times 10$, and $10 \times 20$ grids, respectively.  In each case, we solve DCP~\eqref{eq:valueiteration} (linear program in the case of total expectation) with $|\mathcal{S}||Act|=MN \times 4 = 4MN$ constraints and $MN+1$ variables (the risk value functions $J$'s and $\zeta$ for CVaR and EVaR as discussed in the Appendix). In these experiments, we set the confidence levels to $\varepsilon= 0.3$ (more risk-averse) and $\varepsilon=0.7$ (less risk-averse) for both CVaR and EVaR coherent risk measures. The initial condition was chosen as $s_0=s^1$, \textit{i.e.,} the agent starts at the leftmost grid at the bottom, and the goal state was selected as $s^g = s^{MN}$, \textit{i.e.,}  the rightmost grid at the top.
 
 \begin{table}[t!]
\label{table:comparison}
\centering
\setlength\tabcolsep{2.5pt}
\begin{tabular}{lccccc} \midrule
\makecell{$(M \times N)_{\rho}$}  & \makecell{$J^*(s_0)$}  & \makecell{ Total \\  Time [s] }  & \makecell{\# U.O.}   & \makecell{ F.R.}   \\ 
\midrule
$(4\times 5)_{\mathbb{E}}$           & 13.25                          & 0.56 & 2                & 39\%                                        \\[1.5pt]
$(10\times 10)_{\mathbb{E}}$          & 27.31                          &  1.04  & 4           & 46\%                                             \\[1.5pt]
$(10\times 20)_{\mathbb{E}}$         & 38.35                          &  1.30 & 8               & 58\%                                         \\[1.5pt]
\midrule
$(4\times 5)_{\text{CVaR}_{0.7}}$            & 18.76                          &  0.58 & 2            &14\%                                            \\[1.5pt]
$(10\times 10)_{\text{CVaR}_{0.7}}$           & 35.72                          & 1.12   & 4           &19\%                                          \\[1.5pt]
$(10\times 20)_{\text{CVaR}_{0.7}}$         & 47.36                          & 1.36   & 8          &21\%                                           \\[1.5pt]
$(4\times 5)_{\text{CVaR}_{0.3}}$            & 25.69                          &  0.57 & 2            &10\%                                            \\[1.5pt]
$(10\times 10)_{\text{CVaR}_{0.3}}$           & 43.86                          & 1.16   & 4           &13\%                                          \\[1.5pt]
$(10\times 20)_{\text{CVaR}_{0.3}}$         & 49.03                          & 1.34   & 8          &15\%                                           \\[1.5pt]
\midrule
$(4\times 5)_{\text{EVaR}_{0.7}}$            & 26.67                         & 1.83  & 2          &9\%                                            \\[1.5pt]
$(10\times 10)_{\text{EVaR}_{0.7}}$          & 41.31                         & 2.02         &  4         &11\%                                      \\[1.5pt]
$(10\times 20)_{\text{EVaR}_{0.7}}$            & 50.79                         & 2.64      & 8           &17\%                                       \\[1.5pt]
$(4\times 5)_{\text{EVaR}_{0.3}}$            & 29.05                         & 1.79  & 2          &7\%                                            \\[1.5pt]
$(10\times 10)_{\text{EVaR}_{0.3}}$          & 48.73                         & 2.01         &  4         &10\%                                      \\[1.5pt]
$(10\times 20)_{\text{EVaR}_{0.3}}$            & 61.28                         & 2.78      & 8           &12\%                                       \\[1.5pt]
\midrule
\end{tabular}
\caption{Comparison between total expectation, CVaR, and EVaR  risk measures. $(M \times N)_{\rho}$ denotes the grid-world of size $M \times N$ and one-step coherent risk measure $\rho$. Total Time denotes the time taken by the CVX solver to solve the associated linear programs or DCPs. $\#$ U.O. denotes the number of single grid uncertain obstacles used for robustness test.  F.R. denotes the failure rate out of 100 Monte Carlo simulations. \vspace{-1cm}}
\end{table}

A summary of our numerical experiments is provided in Table~1. Note the computed values of Problem 1 satisfy $\mathbb{E}(c)\le \mathrm{CVaR}_\varepsilon(c) \le \mathrm{EVaR}_\varepsilon(c)$. This is in accordance with the theory that EVaR is a more conservative coherent risk measure than CVaR~\cite{ahmadi2012entropic} (see also our work on EVaR-based model predictive control for dynamically moving obstacles~\cite{dixit2020risksensitive}). Furthermore, the total accrued risk cost is higher for $\varepsilon =0.3$, since this leads to  more risk-averse policies. 

For total expectation coherent risk measure, the calculations took significantly less time, since they are the result of solving a set of linear programs. For CVaR and EVaR, a set of DCPs were solved. EVaR calculations were the most computationally involved, since they require solving exponential cone programs. Note that these calculations can be carried out offline for policy synthesis and then the policy can be applied for risk-averse robot path planning.

The table also outlines the failure ratios of each risk measure. In this case, EVaR outperformed both CVaR and total expectation in terms of robustness, which is consistent with the fact that EVaR is a more conservative risk measure. Lower failure/collision rates were observed for $\varepsilon = 0.3$, which correspond to more risk-averse policies. In addition, these results suggest that, although total expectation can be used as a measure of performance in high number of Monte Carlo simulations, it may not be practical to use it for real-world planning under uncertainty scenarios. CVaR and EVaR seem to be a more efficient metric for performance in shortest path planning under uncertainty.

\vspace{-.2cm}
\section{Conclusions} \label{sec:conclusions}

We proposed a method based on dynamic programming for designing risk-averse policies for the SSP problem. We presented a computational approach in terms of difference convex programs for finding the associated risk value functions and hence the risk-averse policies. Future research will extend to risk-averse MDPs with average costs and risk-averse MDPs with linear temporal logic specifications, where the former problem is cast as a special case of the risk-averse SSP problem. In this work, we assumed the states are fully observable, we will study the SSP problems with partial state observation~\cite{MSB19,ahmadi2020risk} in the future, as well. 


\footnotesize{
\bibliography{references}
}
\bibliographystyle{plain}

\normalsize

\section*{Appendix}


In this appendix, we present the specific DCPs for finding the risk value functions for two coherent risk measures studied in our numerical experiments, namely, CVaR and EVaR. 

For a given confidence level $\varepsilon \in (0,1)$, value-at-risk ($\mathrm{VaR}_\varepsilon$) denotes the $(1-\varepsilon)$-quantile value of the cost variable. $\mathrm{CVaR}_\varepsilon$ is the expected loss in the $(1-\varepsilon)$-tail given that the particular threshold $\mathrm{VaR}_\varepsilon$ has been crossed. $\mathrm{CVaR}_\varepsilon$ is given by 
\begin{equation}
    \rho_t(c_{t+1}) = \inf_{\zeta \in \mathbb{R}} \left\{ \zeta + \frac{1}{\varepsilon} \mathbb{E}\left[  (c_{t+1}-\zeta)_{+} \mid \mathcal{F}_t    \right]                        \right\},
\end{equation}
where $(\cdot)_{+}=\max\{\cdot, 0\}$. A value of $\varepsilon \simeq 1$ corresponds to a risk-neutral policy; whereas, a value of $\varepsilon \to 0$ is rather a risk-averse policy. 

In fact, Theorem 1 can applied to CVaR since it is a coherent risk measure. For MDP $\mathcal{M}$, the risk value functions can be computed by DCP~\eqref{eq:DCP}, where 
 $ 
g_2(J) = \inf_{\zeta\in \mathbb{R}} \left\{ \zeta + \frac{1}{\varepsilon} \sum_{s' \in \mathcal{S}}\left(J(s')-\zeta\right)_{+} T(s'\mid s,\alpha)        \right\},
$ 
where the infimum on the right hand side of the above equation can be absorbed into the overal infimum problem, \textit{i.e.,} $\inf_{\boldsymbol{J},\zeta}$. Note that $g_2(J)$ above is  convex in $\zeta$~\cite[Theorem 1]{rockafellar2000optimization}.

Unfortunately, CVaR ignores the losses below the VaR threshold (since it is only concerned with the average of VaR at the $(1-\epsilon)$-tail of the cost distribution). EVaR is the tightest upper bound in the sense of Chernoff inequality for the value at risk (VaR) and CVaR and its dual representation is associated with the relative entropy. In fact, it was shown in~\cite{ahmadi2017analytical} that $\mathrm{EVaR}_\varepsilon$ and $\mathrm{CVaR}_\varepsilon$ are equal only if there are no losses ($c\to 0$) below the $\mathrm{VaR}_\varepsilon$ threshold. In addition, EVaR is a strictly monotone risk measure; whereas, CVaR is only monotone~\cite{ahmadi2019portfolio}~(see Definition 4). $\mathrm{EVaR}_\varepsilon$  is given by
\begin{equation}
    \rho_t(c_{t+1}) = \inf_{\zeta >0} \left(  {\log \left(\frac{\mathbb{E}[e^{\zeta c_{t+1}} \mid \mathcal{F}_t]}{\varepsilon}\right)/ \zeta}        \right).            
\end{equation}
Similar to $\mathrm{CVaR}_\varepsilon$, for $\mathrm{EVaR}_\varepsilon$, $\varepsilon \to 1$ is a risk-neutral case; whereas, $\varepsilon\to 0$ corresponds to a risk-averse case. In fact, it was demonstrated in~\cite[Proposition 3.2]{ahmadi2012entropic} that $\lim_{\varepsilon\to 0} \mathrm{EVaR}_{\varepsilon}(c) = \esssup(c)=\bar{c}$ (worst-case cost). 

 Since $\mathrm{EVaR}_\varepsilon$ is a coherent risk measure, the conditions of Theorem 1 hold. Since $\zeta>0$, using the change of variables, $\tilde{\boldsymbol{J}} \equiv \zeta {\boldsymbol{J}}$ (note that this change of variables is monotone increasing in $\zeta$~\cite{agrawal2018rewriting}), we can compute EVaR value functions by solving~\eqref{eq:DCP}, where
 $
 f_0=0,
 f_1(\tilde{J})=\tilde{J}, 
 g_0(\tilde{{J}})=\sum_{s \in S} \tilde{J}(s),
 g_1(c)=\zeta c,$~{and}  
 $
 g_2(\tilde{{J}}) =   \log\Big(\frac{\sum_{s' \in \mathcal{S}}e^{ \tilde{J}(s' )}T(s'|s,\alpha)}{\varepsilon}\Big).
 $
 
Similar to the CVaR case, the infimum over $\zeta$ can be lumped into  the overall infimum problem, \textit{i.e.,} $\inf_{\tilde{\boldsymbol{J}},\zeta>0}$. Note that $g_2(\tilde{{J}})$ is  convex in  $\tilde{J}$, since the logarithm of sums of exponentials is convex~\cite[p.~72]{boyd2004convex}. 


\end{document}